\DeclareMathOperator{\sign}{sgn}
\newtheorem{theorem}{Theorem}
\newtheorem{lemma}{Lemma}
\newtheorem{defn}{Definition}
\newtheorem{remark}{Remark}
\newtheorem{assumption}{Assumption}
\definecolor{blue}{rgb}{0.13, 0.67, 0.8}
\definecolor{purple}{rgb}{0.41, 0.16, 0.38}
\definecolor{black}{rgb}{0, 0, 0}
\begin{document}
\title{Approximation- and Chattering-free Quasi Sliding Mode Control for Unknown Systems}
\author{Pankaj K Mishra  and Pushpak Jagtap, \IEEEmembership{Member, IEEE}
\thanks{This paper is submitted for consideration of publication.}
\thanks{This work was partly supported by the Google Research Grant, the SERB Start-up Research Grant, and the CSR Grant by Nokia on Networked Robotics.}
\thanks{The authors are with the Robert Bosch Centre for Cyber-Physical Systems,
Indian Institute of Science, Bangalore 560012, India (e-mail:
\{pankajmishra,pushpak\}@iisc.ac.in)}}
\maketitle
\thispagestyle{empty} 
\begin{abstract}

Motivated by the concept of Quasi-Sliding Mode (QSM) in discrete-time systems, this paper presents a novel approach that relaxes the requirement of ubiquitous exact sliding motion in continuous-time systems, aiming to achieve an approximation- and chattering-free quasi-sliding mode controller (QSMC). The proposed QSMC provides robust and chattering-free control for an unknown nonlinear system without the need for intricate control methods, learning agents, or system identification tools. Furthermore, comprehensive simulation studies have been conducted to validate the effectiveness of the proposed strategy.
\end{abstract}

\section{INTRODUCTION}

Sliding mode control (SMC) has emerged as a prominent field of research for its numerous benefits, including rapid convergence and robustness against parametric uncertainty and external disturbances \cite{smcbook}. However, SMC suffers from two significant limitations in real-time applications: ``chattering" or high-frequency oscillation in control input and the challenge of estimating unknown system parameters for equivalent control. Here, by equivalent control, we refer to a control law designed for an unknown system, considering the system is known. Although significant progress has been made in mitigating chattering through complex control laws utilizing higher order SMC and nonlinear sliding surface\cite{FENG20141310, DU201687, 8064665, 880629, 7913590}, the issue of dealing with unknown system parameters for equivalent control continues to be a persistent challenge. Some works have addressed this challenge  \cite{1263584,8744495, 9942821}. For instance, in \cite{1263584}, two parallel neural networks (NNs) were used for a class of linear systems to tackle the chattering problem and unknown parameters in equivalent control. In \cite{8744495}, a complex recurring NN with two hidden layers were employed for a class of nonlinear systems to handle unknown system parameters, albeit overlooking the chattering issue. In \cite{9942821}, a combination of recursive NN with two hidden layers and a fuzzy neural network was used to approximate the equivalent control computed using nonlinear recursive sliding mode strategy \cite{8818641}. While this framework helped mitigate unknown system parameter estimation and reduce chattering, complete elimination of chattering was not achieved, and the control strategy was limited to a specific class of second-order nonlinear systems.\par 
Considering the aforementioned works, higher-order SMC or nonlinear sliding surfaces with complex learning agents seem to be a plausible approach to address the aforementioned issues in SMC. However, using learning tools such as fuzzy logic, NNs, or adaptive algorithms for system identification or parameter estimation comes with computational complexity, many hyperparameters, and challenges in designing complex adaptive laws and generating persistently exciting input signals. Moreover, the results of such tools are typically limited to compact sets, and determining such sets and guaranteeing that the system trajectory remains within them can be challenging, often leading to ad-hoc assumptions. This further adds complexity to the overall approach and may impact the robustness, thus compromising the advantage of SMC. Motivated by the ideas presented in \cite{10004950,10093463}, we propose a Quasi Sliding Mode (QSM) control approach (which has its roots in discrete-time systems) that aims at achieving approximation- and chattering-free control.\par
Traditionally, SMC is designed in two steps: $i)$ the selection of a sliding surface so that the system trajectory has desirable behaviour when confined to it, and $ii)$ the design of control law such that   the trajectory  of the system intersects or reaches the sliding surface and then holds them there so that trajectory slides along it.
\par Looking at the state-of-the-art literature on SMC, it can be found that achieving sliding mode in discrete-time has  remained an issue and leads to state-of-the-art term Quasi Sliding Mode (QSM) \cite{370376, 704892, 4132912}.  Nicely pointed out in \cite{4132912} that the control input in discrete-time sliding mode control is only calculated once per sampling interval and remains fixed throughout the interval. As a result, the system state trajectory may not move along the sliding surface and resemble a sliding-like or quasi-sliding motion. In discrete-time systems, QSM is often a compulsion rather than a choice, as it can be challenging to achieve exact sliding motion. In contrast, obtaining sliding mode is straightforward and ubiquitous in continuous-time systems, making it a feasible option. So ideally achieving QSM in the continuous system typically does not  make sense. Thus so far in the literature, there are counted works \cite{4105940, LI2009103} on QSMC for the continuous-time system.  One should not confuse QSMC with quasi-continuous control, which is an extension of unit controller \cite{1102073,utkin2017sliding} to higher order sliding mode controller \cite{1532412, 8123916}. In \cite{4105940, LI2009103}  QSM based repetitive control   is developed for a system with known nominal dynamics. Repetitive control \cite{8734866} is a control scheme for tracking control and disturbance rejection of periodic signals. This paper is motivated by the reaching law  and time-varying sliding surface-based control strategy of  \cite{704892} and \cite{CHOI1994899,bartoszewicz1996time}, respectively, we have brought out the idea of QSMC for a continuous-time unknown nonlinear system.  We will see that by relaxing the  SMC, i.e., leveraging QSMC, we can eliminate the necessity of knowing the system's dynamics for designing the control law. Thus, it leads to an approximation-free QSMC. By approximation-free, we mean to say that it neither requires any learning agent like a neural network, fuzzy logic or any system identification tools nor any adaptive law to deal with the unknown dynamics of the system. The contributions of this paper are summarized below:
\begin{enumerate}
    \item We introduced the notion of Quasi Sliding Mode (QSM) for continuous-time systems, which facilitates the designs of approximation- and chattering-free controllers.
    \item To achieve QSM in a fixed time, a novel reaching and control law has been proposed.
\end{enumerate}
 
The remainder of this paper is organized as follows: In Section \ref{sec2}, preliminaries and problem statements are
presented.  Section \ref{sec3} presents the proposed QSMC. Section \ref{sec4} presents the stability analysis. Section \ref{sec5} presents the simulation results and discussion. Finally, Section \ref{sec6} concludes this paper.

The \textbf{notation} used in this paper includes $\mathfrak{L}^\infty$ for the set of all essentially bounded measurable functions. $\mathbb{R}$ and $\mathbb{N}$ denotes set of real and natural number. We use $\mathbb{R}^n$ to represent a vector in $n$-dimensional real space. In this paper, we use bold and normal letters to represent vector and scalar quantities, respectively.  For $x (t)\in \mathbb{R}$, $x\uparrow a$: $x$ approaches a real value $a$ from the left side,  $x\downarrow a$:  $x$ approaches a real value $a$ from the right side. $x^{(n)}$ represents the $n$th derivative of $x$. $\binom{m}{k}$ denotes binomial coefficients. $\bar{\mathbf{0}}_i$: $i$-dimensional zero vector.
 $$\sign{(x)}=\begin{cases}1, x>0\\
0, x=0\\
-1, x<0\end{cases}$$

\section{Preliminaries and Problem Statement}\label{sec2}
Consider a class of strict-feedback systems, 
\begin{equation}\label{sys}
    \begin{split}
        \dot x_i&=x_{i+1}, ~ i= 1,\ldots, n-1,\\
        \dot x_n&=f(\textbf{x})+g(\textbf{x})u+d(t),\\
        y&=x_1,
    \end{split}
\end{equation}
where $\textbf{x}=[x_1, \ldots, x_n]^T\in \mathbb{R}^n$ is a state vector, $f:\mathbb{R}^n\rightarrow \mathbb{R}$ is the unknown nonlinear function, $g:\mathbb{R}^n\rightarrow \mathbb{R}$ is unknown control coefficient, $d(t) \in \mathbb{R}$ is unknown piecewise continuous bounded disturbance, $u\in \mathbb{R} $  and $y\in \mathbb{R}$ are the input and output of the system, respectively. Let $x_{des}(t)\in \mathbb{R}$ be the smooth desired output, and $e=x_1-x_{des}$ be the  tracking error and $\textbf{e}=[e^{(0)},\ldots,e^{(n-1)}]^T$ a  vector of derivatives of error, where $e^{(i)}$ represents $i$th time-derivative. \par
In general, for tracking problems, a sliding surface can be described by  $\sigma(\textbf{e})=0$, where $\sigma$ is a sliding variable, and can be defined in error state space as $\sigma(\textbf{e};\textbf{c})=\textbf{c}^T\textbf{e}$ or
\begin{equation}\label{slide}
\sigma(\textbf{e};\textbf{c})=c_1e^{(0)}+c_2e^{(1)}+\cdots+c_{n-1}e^{(n-2)}+e^{(n-1)},
\end{equation}
with parameter vector $\textbf{c}=[c_1, \ldots, c_{n-1}, 1]^T$ is chosen such that the polynomial $s^{n-1}+c_{n-1}s^{n-2}+\cdots+c_{2}s+c_1$ is a Hurwitz polynomial, $s$ is a Laplace variable.\par
\textit{Note:}In this paper, instead of $\sigma(\textbf{e};\textbf{c})$, we will use  $\sigma(t)$ or simply $\sigma$ interchangeably depending on the context. This choice of notation facilitates our analysis and does not compromise the integrity or validity of our results.\par
Before moving to the problem statement, we will define the theme of this paper, i.e., quasi-sliding mode (QSM). As proposed in \cite{370376}, the traditional definition of quasi-sliding mode necessitated the system trajectory to cross the sliding surface in successive steps, resulting in undesirable chattering. However, a more refined definition introduced in \cite{704892} removes this requirement, allowing the system to remain within a small band around the sliding surface without crossing it in each control step. This advancement enables continuous control strategies, which offer smoother and more effective control actions, eliminating chattering and enhancing the overall system performance.  Motivated by this, we proposed our definition below.
\begin{defn}
    We say  system trajectory is in  ``\textbf{quasi-sliding mode} (QSM)"  if $|\sigma(t)|\le \varepsilon$ for all time $t>t_r$, where $\varepsilon$ as a small positive constant referred  as "\textbf{bandwidth of QSM}" and $t_r$ is \textbf{reaching time} for QSM.
\end{defn}

\textit{Control goal:} The paper aims to design a  controller $u$ for system \eqref{sys}, such that the output $y$ tracks the desired trajectory $y_d$. It can also be seen as a twofold goal. First, to design a reaching law for QSM. Second, to design an approximation-free QSMC tracking controller for \eqref{sys} so that the system trajectory achieves QSM and output track the desired trajectory.\par
To achieve the above goal, we have made the following assumptions.
\begin{assumption}\label{gassump}
The sign of control gain $g(\textbf{x})$ is known, and there exists an unknown positive constant $\underbar g$, such that $0< \underbar g\le |g(\textbf{x})|,$ for $\textbf{x}\in \mathbb{R}^n.$
\end{assumption}
\begin{assumption}\label{dassump}
There exist a positive constant $\bar d$ such that $|d(t)|<\bar d$, $\forall t\ge 0.$
\end{assumption}
\begin{assumption}\label{eassum}
    There exists a known positive constant $\bar e_i,$ such that, $|e^{(i)}(0)|<\bar e_i, ~i=0,\ldots, n-1.$ 
\end{assumption}
In the next section, we proposed a control strategy to satisfy the reaching condition to guarantee the reachability of the QSM.

\section{Proposed Quasi Sliding Mode Control}\label{sec3}
In our control approach, we aim to achieve Quasi-Sliding Mode (QSM) by first defining a reaching law. Drawing inspiration from the philosophical aspect of \cite{704892}, we build upon the concept of time-varying sliding surfaces \cite{CHOI1994899, bartoszewicz1996time} as the foundation for our reaching law. This law requires that the state of the controlled system remains within a predefined band around a time-varying sliding surface, which converges monotonically and in fixed time from its initial position to a pre-determined band. The \textbf{reaching law} to achieve $\varepsilon$ bandwidth of QSM is proposed as:
\begin{equation}\label{reach}
    |\sigma(t)|<\rho(t), \forall t\ge 0,
\end{equation}
where $\rho(t)=\rho_0\exp{(-\mu t)}+\rho_\infty$ with $\rho_0, \rho_\infty$ and $\mu$ are positive constants, which satisfy conditions $C1:\rho_0>|\sigma(0)|$ and $C2: \rho_\infty<\varepsilon<\rho_0$. It is a time-varying QSM reaching constraint which drives sliding variables to reach a predefined band $\varepsilon>\rho_\infty$, so that $|\sigma(t)|<\varepsilon, \forall t\ge t_r$ with $t_r<\frac{1}{\mu}\ln\left(\frac{\rho_0}{\varepsilon-\rho_\infty}\right).$ In Fig. \ref{intercept},  the graphical representation of reaching law has been shown  for the second order system or the case when $\sigma(t)=c_1e_1+e_2,$ with $e_1=e^{(0)}$ and $e_2=e^{(1)}$.

\begin{figure}[h]
    \centering
\includegraphics[trim=150 100 300 150,clip,width=12cm,height=8cm]{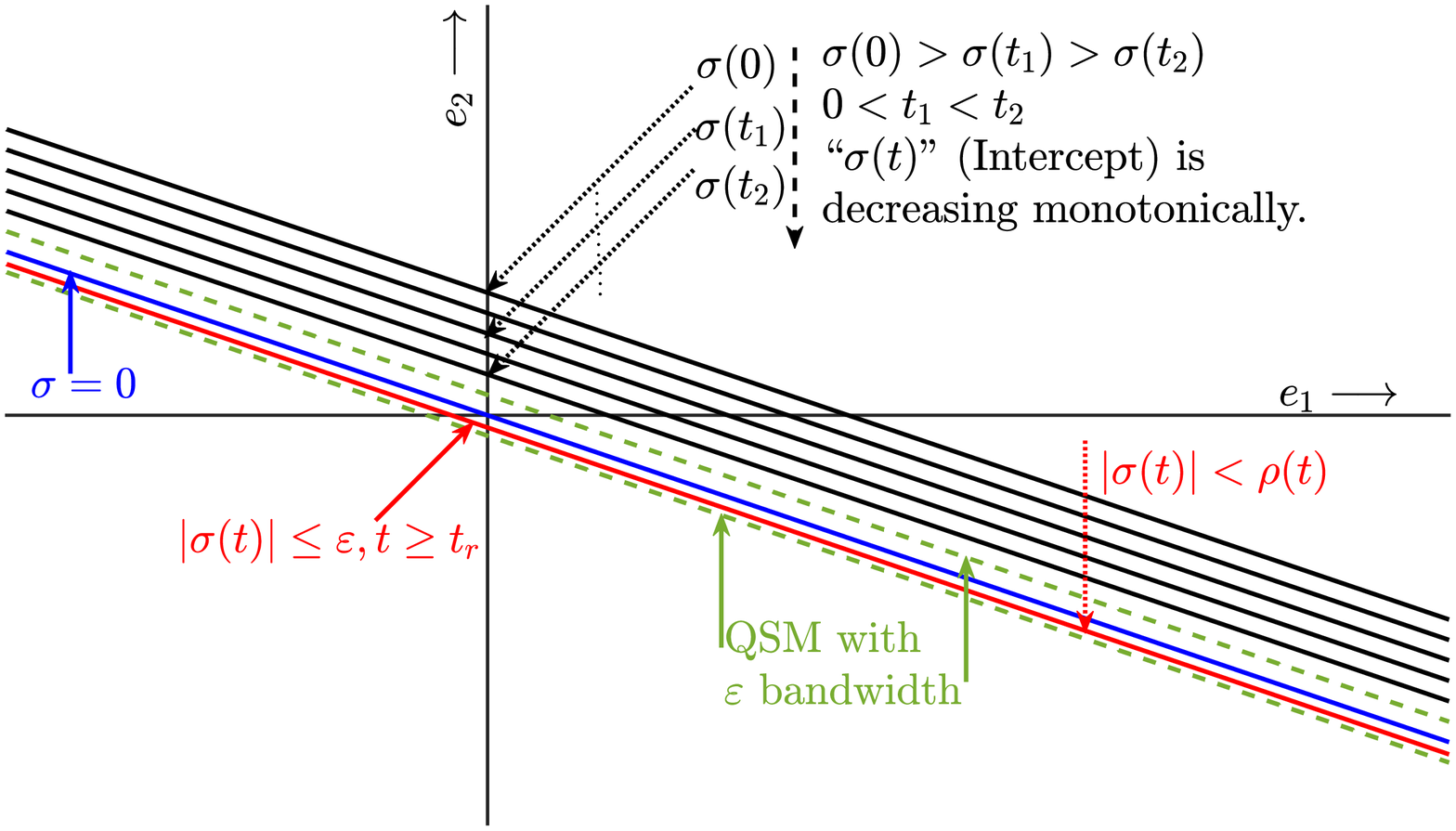}
\vspace*{-20mm}
    \caption{Graphical representation of reaching law for second order system ($\sigma=c_1e_1+e_2$); Plot of  $e_1$ vs $e_2$, with x-axis and y-axis intercept, $\sigma(t)/c_1$ and $\sigma(t)$, respectively. The black solid lines represent various hypersurfaces before reaching QSM with monotonically decreasing (red dotted line) intercepts of both axes. As per $\rho(t), $ in \eqref{reach}), eventually, the hypersurface will be inside the $\varepsilon$-bandwidth QSM (green dashed lines) in time $t\ge t_r$, shown as an arbitrary hypersurface (red solid line). The blue solid line represents the ideal sliding surface.}
    \label{intercept}
\end{figure}
The \textbf{control law} is proposed as 
\begin{equation}\label{input}
    u(t)=-\sign {(g(\textbf{x}))}\tan\left(\frac{\pi \sigma(t)}{2\rho(t)}\right).
\end{equation}

\begin{remark}  
By using Assumption \ref{gassump}, it can be concluded that the control input \eqref{input} is continuous, ensuring that it does not exhibit chattering. Moreover, it is noteworthy that \eqref{input} does not require any knowledge of the system, making it a low-complexity approximation- and chattering-free controller.
\end{remark}
Before we theorize that the control law will help to achieve the QSM, it is necessary to establish the following lemmas to assist its proof.

\begin{lemma}\label{lemma1}
If  the sliding variable  transgresses its upper bound, as given in \eqref{reach}, i.e., $``\rho(t)"$, then $(\sigma-\rho)$ will approach $0$ from the left side, and the following condition  will hold  \begin{align}
   \lim_{(\sigma-\rho)\uparrow 0}{\dot \sigma}&\ge-\mu\rho_{0},\label{taut1}
 \end{align}
\end{lemma}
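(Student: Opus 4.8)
The plan is to read the hypothesis ``the sliding variable transgresses its upper bound'' as the statement that $\sigma$ reaches the moving boundary $\rho$ from inside the band, so that the gap $\sigma-\rho$ is negative and climbs up toward $0$, and then to extract a necessary condition on $\dot\sigma$ at that approach before bounding it with the explicit form of $\rho$. First I would note that by condition $C1$ the trajectory starts strictly inside the band, $|\sigma(0)|<\rho_0=\rho(0)$; hence any transgression of the upper bound can only occur by $\sigma$ crossing $\rho$ \emph{from below}. Writing $w(t):=\sigma(t)-\rho(t)$ as shorthand, this means $w(t)<0$ on the run-up to the crossing and $w\uparrow 0$, which is precisely the ``approach from the left side'' asserted in the statement.

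The key step is to turn this geometric picture into a sign condition on a derivative. Since $w$ is negative and increases up to $0$ in the limit, its left time-derivative cannot be negative there, for otherwise $w$ would be decreasing and moving \emph{away} from $0$ rather than toward it. I would therefore assert
\begin{equation*}
\lim_{(\sigma-\rho)\uparrow 0}\frac{d}{dt}(\sigma-\rho)\ge 0,\qquad\text{i.e.}\qquad \lim_{(\sigma-\rho)\uparrow 0}\dot\sigma\ge \lim_{(\sigma-\rho)\uparrow 0}\dot\rho .
\end{equation*}

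The remainder is a routine computation. Differentiating $\rho(t)=\rho_0\exp(-\mu t)+\rho_\infty$ gives $\dot\rho(t)=-\mu\rho_0\exp(-\mu t)$, and since $\mu>0$ and $t\ge 0$ we have $0<\exp(-\mu t)\le 1$, so that $\dot\rho(t)\ge-\mu\rho_0$ holds \emph{uniformly} for all $t\ge 0$ (the fastest contraction of the band occurring at $t=0$). Because this lower bound is independent of the particular crossing time, I need not locate that time: combining it with the previous display yields $\lim_{(\sigma-\rho)\uparrow 0}\dot\sigma\ge-\mu\rho_0$, which is exactly \eqref{taut1}.

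I expect the main obstacle to be making the derivative-sign step genuinely rigorous rather than merely pictorial, i.e.\ justifying that approaching the upper barrier from below forces $\tfrac{d}{dt}(\sigma-\rho)\ge0$ in the limit; the cleanest way is to posit a first crossing time $t^\ast$ with $\sigma(t)<\rho(t)$ for $t<t^\ast$ and $\sigma(t^\ast)=\rho(t^\ast)$, and read off the inequality from the sign of the left difference quotient of $w$ at $t^\ast$. It is worth flagging that the inequality points ``the wrong way'' for confinement on purpose: \eqref{taut1} is a \emph{necessary} condition for transgression, and in the ensuing analysis it will be contradicted by the barrier control \eqref{input}, which drives $\dot\sigma\to-\infty<-\mu\rho_0$ as $\sigma\uparrow\rho$, thereby ruling out any crossing.
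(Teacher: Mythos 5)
Your proposal is correct and follows essentially the same route as the paper's own proof: both argue that $\sigma-\rho$ is negative before the crossing and hence approaches $0$ from the left with nonnegative limiting derivative, giving $\lim\dot\sigma\ge\dot\rho$, and both then bound $\dot\rho=-\mu\rho_0\exp(-\mu t)\ge-\mu\rho_0$. Your added suggestion to formalize the sign step via a first crossing time is a reasonable refinement of the same idea, not a different approach.
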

\begin{proof}
Before exceeding its limits, the sliding variable must be within its bounds (i.e., $-\rho<\sigma<\rho$). This means that $ -2\rho<\sigma-\rho<0$ holds.
Thus if $\sigma$ is exceeding its upper bound, i.e., $\rho$, then $(\sigma-\rho)$ will approach $0$ from the left. Therefore, it is straightforward to deduce that when $(\sigma-\rho)$ approaches $0$ from the left, its time derivative will be greater than equal to $0$. As a result, we have
\begin{align}\label{zphi}
    \lim_{(\sigma-\rho)\uparrow 0}{\dot \sigma}\ge\dot\rho.
\end{align}
Noting that $\rho(t)=\rho_0\exp{(-\mu t)}+\rho_\infty$, so  we have $-\mu\rho_0\le\dot\rho\le0$. As a result, using \eqref{zphi} we have
 $\lim_{(\sigma-\rho)\uparrow 0}{\dot  \sigma}\ge-\mu\rho_0$. 
\end{proof}
\begin{lemma}\label{lemma2}
If  the sliding variable  transgresses its lower bound, as given in \eqref{reach}, i.e., $``-\rho(t)"$ then  $(\sigma+\rho)$ will approach $0$ from the right side and the following condition  will hold  \begin{align}
   \lim_{(\sigma+\rho)\downarrow 0}{\dot \sigma}&\le\mu\rho_{0}.\label{taut2}
 \end{align}
\end{lemma}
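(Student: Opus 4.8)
The plan is to mirror the argument used for Lemma \ref{lemma1}, exploiting the symmetry between the upper and lower edges of the band in \eqref{reach}. First I would note that immediately before the sliding variable breaches its lower bound $-\rho(t)$, the trajectory must still lie strictly inside the band, so that $-\rho<\sigma<\rho$. Adding $\rho$ across this chain gives $0<\sigma+\rho<2\rho$, which shows that the shifted quantity $(\sigma+\rho)$ is strictly positive; as $\sigma$ descends toward its lower limit, $(\sigma+\rho)$ must therefore approach $0$ from above. This justifies the one-sided limit $(\sigma+\rho)\downarrow 0$ appearing in the statement.

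The second step would convert the geometric fact that $(\sigma+\rho)$ decreases to $0$ from the right into a sign condition on its derivative. A positive quantity that tends to $0$ cannot be increasing as it reaches the boundary, so in the limit $\frac{d}{dt}(\sigma+\rho)\le 0$, i.e. $\dot\sigma+\dot\rho\le 0$, and hence $\dot\sigma\le-\dot\rho$. This is precisely the dual of the corresponding step in Lemma \ref{lemma1}, where the upper-bound crossing instead produced $\dot\sigma\ge\dot\rho$.

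Finally, I would bound $-\dot\rho$ using the explicit form $\rho(t)=\rho_0\exp(-\mu t)+\rho_\infty$. Differentiation gives $\dot\rho=-\mu\rho_0\exp(-\mu t)$, so $-\mu\rho_0\le\dot\rho\le 0$ and consequently $0\le-\dot\rho\le\mu\rho_0$. Chaining this with the previous inequality yields $\lim_{(\sigma+\rho)\downarrow 0}\dot\sigma\le-\dot\rho\le\mu\rho_0$, which is exactly \eqref{taut2}.

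I do not expect a genuine obstacle here, since the claim is the exact counterpart of Lemma \ref{lemma1}; the only point demanding care is the sign bookkeeping when translating the direction of approach (from the right at the lower bound, as opposed to from the left at the upper bound) into the correct sense of the derivative inequality.
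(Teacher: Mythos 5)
Your proposal is correct and follows exactly the route the paper intends: the paper's own proof of Lemma \ref{lemma2} is simply the remark that it is ``similar to that of Lemma \ref{lemma1},'' and your argument is precisely that mirror image, with the sign bookkeeping ($0<\sigma+\rho<2\rho$, hence $\dot\sigma+\dot\rho\le 0$ at the right-sided limit, hence $\dot\sigma\le-\dot\rho\le\mu\rho_0$) carried out correctly.
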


\begin{proof}
    The proof is similar to that of the Lemma \ref{lemma1}.
\end{proof}
In the next section, we have used the above lemmas to support our proposed control strategy.
\section{Stability Analysis}\label{sec4}
In this section, we have theorized the proposed control strategy  and mathematical proof is presented for the same. 
\begin{theorem}\label{thm1}
Given a system \eqref{sys} holding assumptions \ref{gassump}-\ref{eassum}, if initially sliding variable hold $|\sigma(0)|<\rho_0$ and control law is designed as \eqref{input}, then its trajectory achieve QSM with $\varepsilon$ bandwidth in time $t_r<\frac{1}{\mu}\ln\left(\frac{\rho_0}{\varepsilon-\rho_\infty}\right)$ and system output track the desired trajectory.
\end{theorem}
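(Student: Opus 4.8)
The plan is to prove the theorem in two stages: first establish the \emph{invariance} of the time-varying band, i.e.\ that $|\sigma(t)| < \rho(t)$ holds for all $t \geq 0$ whenever $|\sigma(0)| < \rho_0 < \rho(0)$, and then extract QSM and tracking as consequences. The engine of the invariance argument is a contradiction against Lemmas \ref{lemma1} and \ref{lemma2}. Suppose $\sigma$ were to reach its upper envelope, and let $t^\ast$ be the first such instant, so that $\sigma(t) < \rho(t)$ on $[0,t^\ast)$ with $\sigma(t^\ast) = \rho(t^\ast)$. Lemma \ref{lemma1} then forces the kinematic bound $\dot\sigma \geq -\mu\rho_0$ as $(\sigma-\rho)\uparrow 0$. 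I would contradict this finite lower bound by computing $\dot\sigma$ directly from the dynamics and the control law.

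Differentiating the sliding variable \eqref{slide} and substituting the chain $\dot x_i = x_{i+1}$ together with $\dot x_n = f(\mathbf{x}) + g(\mathbf{x})u + d(t)$ gives
$$\dot\sigma = \sum_{i=1}^{n-1} c_i e^{(i)} + f(\mathbf{x}) + g(\mathbf{x})u + d(t) - x_{des}^{(n)}.$$
Inserting the control law \eqref{input} and using $g(\mathbf{x})\sign(g(\mathbf{x})) = |g(\mathbf{x})|$ yields $g(\mathbf{x})u = -|g(\mathbf{x})|\tan\!\left(\tfrac{\pi\sigma}{2\rho}\right)$. As $\sigma \uparrow \rho$ the argument $\tfrac{\pi\sigma}{2\rho}\uparrow\tfrac{\pi}{2}$, so $\tan(\cdot)\to+\infty$, and since Assumption \ref{gassump} guarantees $|g(\mathbf{x})| \geq \underbar g > 0$, we get $g(\mathbf{x})u \to -\infty$. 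Provided the remaining terms stay finite at the crossing, it follows that $\dot\sigma \to -\infty$, contradicting $\dot\sigma \geq -\mu\rho_0$. Hence $\sigma$ can never touch $\rho$. The lower envelope is symmetric via Lemma \ref{lemma2}: as $\sigma \downarrow -\rho$ one gets $\tan(\cdot)\to-\infty$, so $g(\mathbf{x})u \to +\infty$ and $\dot\sigma \to +\infty$, contradicting $\dot\sigma \leq \mu\rho_0$. Together these give $|\sigma(t)| < \rho(t)$ for all $t \geq 0$.

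The step I expect to be the genuine obstacle is justifying that the non-control terms $\sum c_i e^{(i)}$, $f(\mathbf{x})$, $d(t)$ and $x_{des}^{(n)}$ remain bounded at the putative crossing, since a blow-up there could cancel the $\tan\to\infty$ contribution. Disturbance boundedness is Assumption \ref{dassump} and $x_{des}$ is smooth, so the real content is boundedness of the error derivatives and of $f(\mathbf{x})$. I would argue this on the open interval $[0,t^\ast)$, where by definition $|\sigma| < \rho \le \rho_0 + \rho_\infty$ is bounded. Writing the reduced error vector $\tilde{\mathbf{e}} = [e^{(0)},\dots,e^{(n-2)}]^T$ in companion form $\dot{\tilde{\mathbf{e}}} = A_c\tilde{\mathbf{e}} + b_c\,\sigma$, the matrix $A_c$ carries exactly the Hurwitz polynomial $s^{n-1}+c_{n-1}s^{n-2}+\cdots+c_1$; stability of $A_c$ together with the bounded input $\sigma$ and the bounded initial data of Assumption \ref{eassum} gives bounded $\tilde{\mathbf{e}}$, hence bounded $e^{(n-1)} = \sigma - \sum_{i=1}^{n-1} c_i e^{(i-1)}$, hence bounded $\mathbf{x}$, and therefore bounded $f(\mathbf{x})$ by continuity. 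This closes the contradiction.

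Finally I would harvest the two claimed conclusions. Because $\rho(t) = \rho_0 e^{-\mu t} + \rho_\infty$ decreases monotonically to $\rho_\infty < \varepsilon$, solving $\rho(t) = \varepsilon$ gives $t = \frac{1}{\mu}\ln\!\big(\frac{\rho_0}{\varepsilon-\rho_\infty}\big)$; for every $t$ exceeding this value $\rho(t) < \varepsilon$, and the invariance bound then yields $|\sigma(t)| < \varepsilon$, i.e.\ QSM with bandwidth $\varepsilon$ and reaching time $t_r$ as stated (and since the bound is independent of the particular $\sigma(0)$, this gives the fixed-time character). For tracking, the same Hurwitz companion system $\dot{\tilde{\mathbf{e}}} = A_c\tilde{\mathbf{e}} + b_c\sigma$ is input-to-state stable, so the ultimate bound $|\sigma| \le \varepsilon$ propagates to an ultimate bound on $\mathbf{e}$ that shrinks with $\varepsilon$, establishing that $y = x_1$ tracks $x_{des}$ within a tunable neighbourhood.
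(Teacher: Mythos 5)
Your proposal is correct and follows essentially the same route as the paper: a contradiction at the first band-crossing instant, pitting the finite bounds $\dot\sigma \ge -\mu\rho_0$ and $\dot\sigma \le \mu\rho_0$ from Lemmas \ref{lemma1} and \ref{lemma2} against the blow-up of $g(\mathbf{x})u=-|g(\mathbf{x})|\tan\left(\tfrac{\pi\sigma}{2\rho}\right)$ as $\sigma$ approaches $\pm\rho$, with all remaining terms in $\dot\sigma$ shown bounded on the interval before the crossing. If anything, your companion-form argument for the boundedness of the $e^{(i)}$ on $[0,t^\ast)$ is more careful than the paper's (which infers it directly from ``$\sigma$ is a linear combination of the $e^{(i)}$''), and your ISS-based tracking conclusion is a mild generalization of the paper's, which fixes $c_i=\binom{n-1}{n-i}a^{n-i}$ and invokes the standard Slotine--Li bound.
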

\begin{proof}
We will prove it by contradiction, and for it, previously established lemmas will be used as a tautology. Let's begin the proof with the statement $``S1"$ below:\par
$S1:$ There exists at least a time instant at which $\sigma$ transgress its bounds, or, $ \exists~ t_j \in  (t_1,\ldots, t_i, \ldots, t_{\bar n} ) \text{~such that~} |\sigma(t_j)|>\rho(t_j),$
 where $t_i<t_{i+1}$, $t_i$ represent $i${th} instant of violation of QSM reaching constraint, $i \in \mathbb{N}$, and $\bar n \in \mathbb{N}$.\\
 Suppose that  \textit{S}$1$ is true. Then we have
\begin{align}\label{t1}
    |\sigma(t)|<\rho(t),~ \forall t\in [0,t_1).
\end{align}
Suppose that at the instant of time $t_1$, the sliding variable is transgressing its bounds. With the following analysis, we will see that it transgresses its bounds.\\
Using \eqref{sys} and \eqref{slide}, we can write 
\begin{align}\label{sigmad}
    \dot \sigma=f(\textbf{x})+g(\textbf{x})u+d-x_{des}^{(n)} +\Sigma_{i=1}^{n-1}c_ie^{(i)}.
\end{align}
For $t<t_1$, using \eqref{t1}, we have $\sigma \in \mathfrak{L}_\infty$, and since $\sigma$ is linear combination of $e^{(i)}, i=\{0,\hdots, n-1\}$, thus $e^{(i)} \in \mathfrak{L}_\infty, i=\{0,\ldots, n-1\}, \forall t\in [0,t_1)$. Further, we have  $e^{(i)}=x_{i+1}-x_{des}^{(i)}, i=\{0,\hdots, n-1\}$ and $x_{des}(t)$ is smooth function, hence $x_{des}(t) \in \mathfrak{L}_\infty, \forall t\in [0,t_1)$, consequently $x_i   \in \mathfrak{L}_\infty, i=\{1, \hdots, n\}, \forall t\in [0,t_1)$. Since $f(\textbf{x})$ is continuous with respect to $\textbf{x}$, i.e., $\textbf{x}=[x_1, \hdots, x_n]$, so $f(\textbf{x}) \in \mathfrak{L}_\infty, \forall t\in [0,t_1)$. Following Assumption \ref{dassump}, we have $d(t) \in \mathfrak{L}_\infty, \forall t\in [0,t_1)$. Further, following Assumption \ref{gassump},  and  \eqref{input}, we  have $\liminf_{(\sigma-\rho)\uparrow0}g(\textbf{x})u=-\infty$ and $\limsup_{(\sigma+\rho)\downarrow0}g(\textbf{x})u=\infty.$ With this analysis we can obtain  $\forall t\in [0,t_1)$ in \eqref{sigmad}
\begin{align}
    \liminf_{(\sigma-\rho)\uparrow0}\dot \sigma&=-\infty,\label{contr1}\\
    \limsup_{(\sigma+\rho)\downarrow0}\dot \sigma&=\infty.\label{contr2}
\end{align}
It can be inferred that \eqref{contr1} and \eqref{contr1} contradict \eqref{taut1} and \eqref{taut2}, respectively. Hence there does not exist any time $t1$, or statement $S1$ is false, or $|\sigma(t)|<\rho(t), \forall t\ge 0.$ Since $\rho(t)=\rho_0\exp{(-\mu t)}+\rho_\infty$, consequently it is straightforward to obtain $|\sigma(t)|<\varepsilon, \forall t\ge\frac{1}{\mu}\ln\left(\frac{\rho_0}{\varepsilon-\rho_\infty}\right).$ This completes the proof for reaching mode. Next, we will analyze QSM.\par
Further, if  $c_i$ in the sliding surface is chosen as $c_i=\binom{n-1}{n-i}a^{n-i}$, then one can find  in chapter 7 of \cite{slotine1991applied} that $|\sigma(t)|<\varepsilon$ implies $|e^{(i)}|<(2a)^i\frac{\varepsilon}{a^{(n-1)}}$, provided $\textbf{e}(0)=\textbf{0}_n$, or if $\textbf{e}(0)\neq\textbf{0}_n$ then $|e^{(i)}|$ will achieve the same bound asymptotically within a shorter time constant $\frac{n-1}{a}$. Thus with proper selection of $\varepsilon$ and $a$, we can achieve asymptotic tracking control, with tracking error in small residual set, i.e., $|e^{(0)}|<\frac{\varepsilon}{a^{(n-1)}}$. This completes the proof.
\end{proof}
\section{Simulation Results}\label{sec5}
To demonstrate the effectiveness of the results, We have considered two examples for the simulation studies, which are as follows.\par
{\textit{Example 1:}} Consider an inverted pendulum system [6] with dynamics given by 
\begin{align*}
    \dot x_1 &= x_2,\\ \dot x_2 &= -\frac{g}{l}\sin(x_1) - \frac{k}{m}x_2 + \sin(x_2) + \frac{1}{ml^2}u + d,\\ y &= x_1,
\end{align*} 
where $x_1$ and $x_2$ represent the angular position and angular velocity, respectively, and $y$ is the output. The system parameters  $m = 0.01$ kg, $l = 9.8$ m, $k = 0.01$, and $g$ represent the end mass, length of the rod, friction coefficient, and acceleration due to gravity, respectively, with $u$ as the control input. Additionally, a disturbance $d = 0.5\sin(t)$ is applied to the system. The desired output is chosen as $y_d = \sin(t)$. To assess the effectiveness of the proposed algorithm, we conducted simulations with various initial conditions of the states. It is chosen as  $\textbf{x}(0)=[0.9~ 0.9]$, $\textbf{x}(0)=[0.7~ 0.7]$,  $\textbf{x}(0)=[0.3~ 0.3]$, and $\textbf{x}(0)=[0.1~ 0.1]$.\par
\begin{figure}
    \centering
\includegraphics[width=9cm,height=3cm]{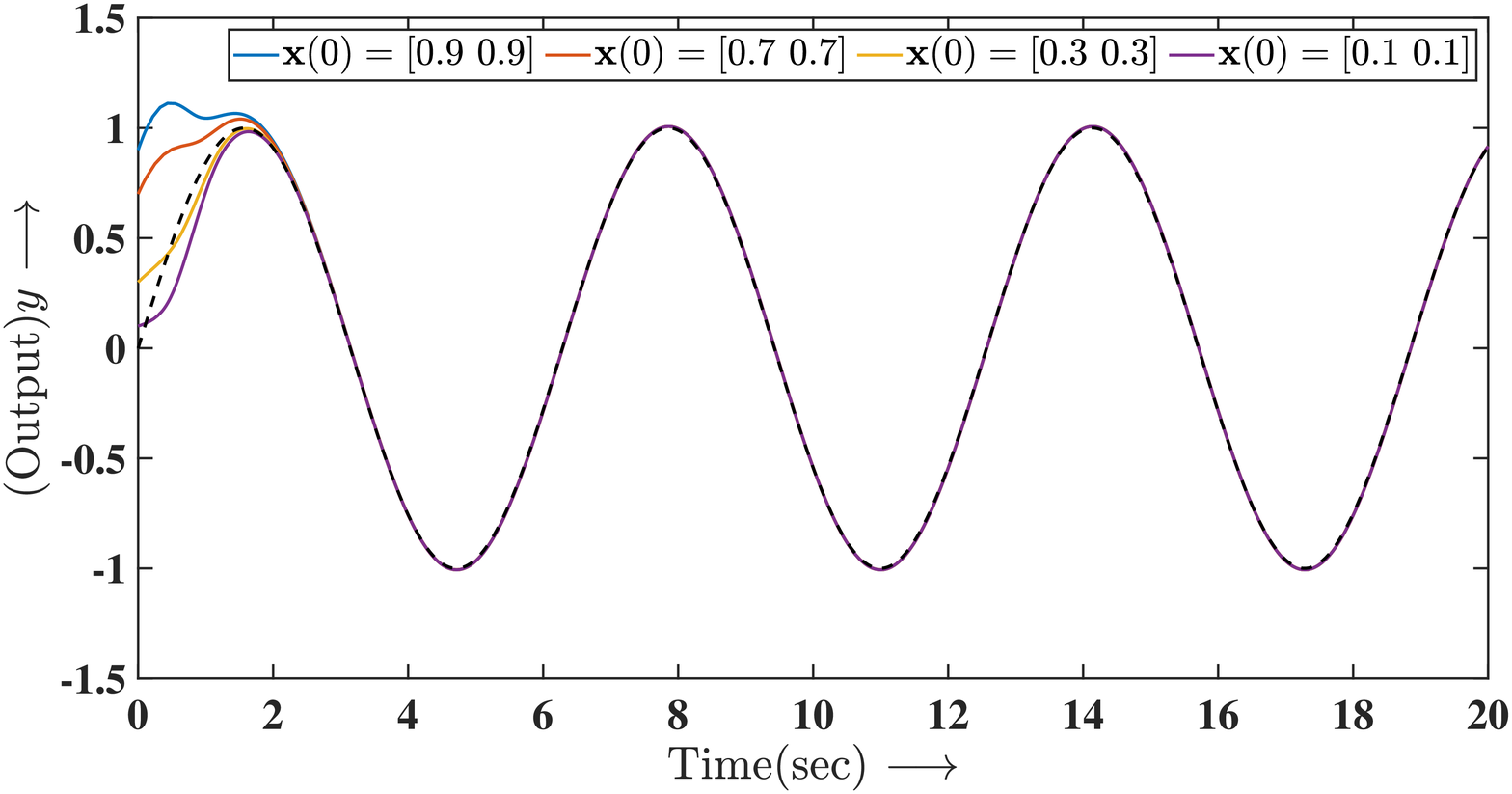}
    \caption{Tracking desired output (Example 1); Solid and dashed lines represent the actual output and desired trajectory, respectively. }
    \label{figop}
\end{figure}
\begin{figure}
    \centering
\includegraphics[width=9cm,height=3cm]{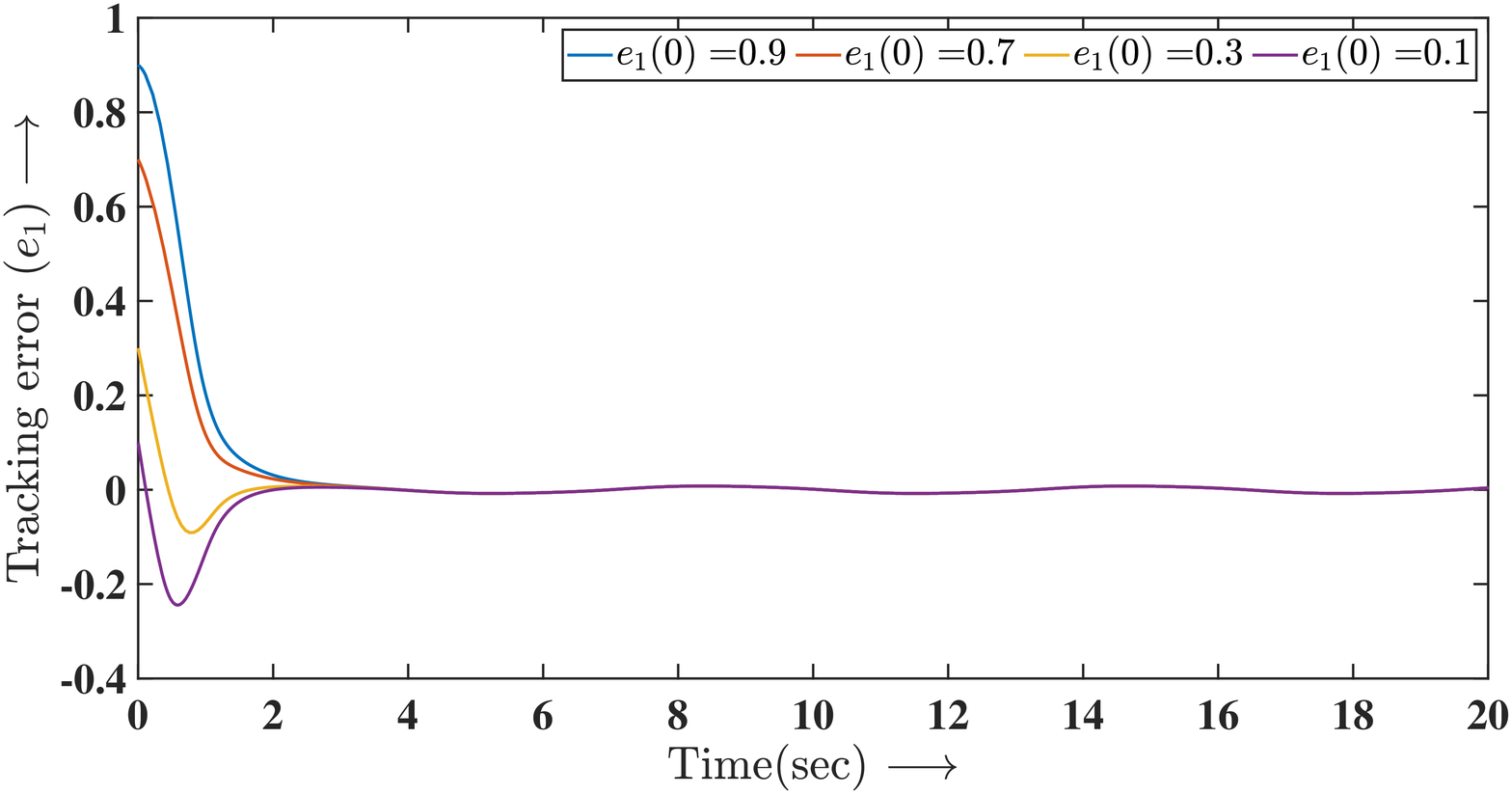}
    \caption{Tracking error performance (Example 1).}
    \label{figerr}
\end{figure}

\begin{figure}
    \centering
\includegraphics[width=9cm,height=3cm]{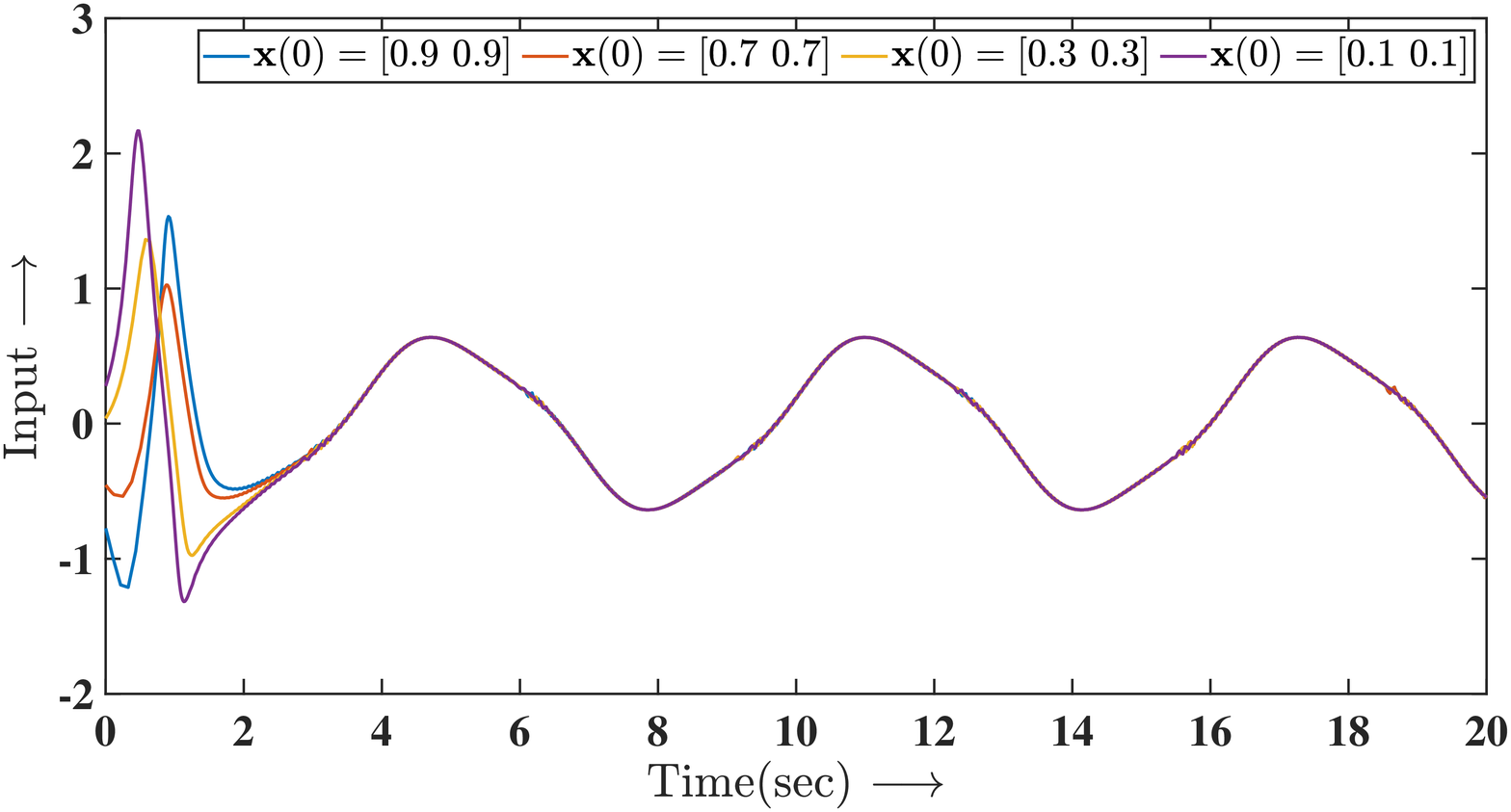}
    \caption{Input performance (Example 1).}
    \label{figip}
\end{figure}

\begin{figure}
    \centering
\includegraphics[width=9cm,height=3cm]{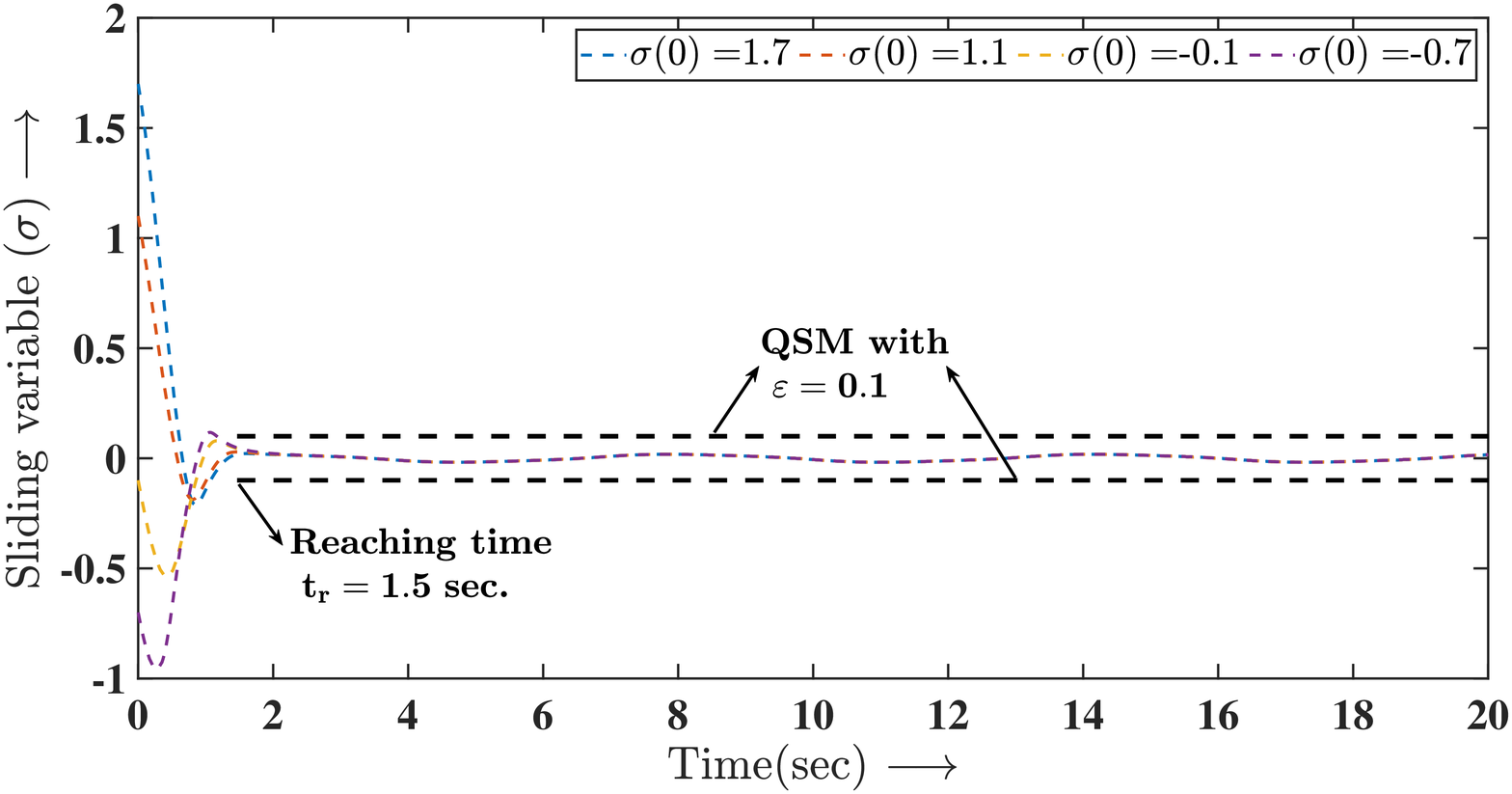}
    \caption{QSM reaching law performance (Example 1). }
    \label{figsv}
\end{figure}
QSMC is designed as follows. Firstly, the sliding surface is chosen as given in \eqref{slide}, with $c_1=2$, resulting in $\sigma=2e^{(0)}+e^{(1)}$, where $e^{(0)}= y - y_d$ and $e^{(1)}=x_2-\dot y_d$. Secondly, a reaching law for QSMC is designed with a small bandwidth $\varepsilon$, set to $\varepsilon=0.1$. The reaching law is given by \eqref{reach}, i.e., $|\sigma(t)|<\rho(t)$, where $\rho(t)=4\exp(-3t)+0.05$. It can be verified that $\rho(t)$ satisfies conditions $C1$ and $C2$ for all sets of chosen initial conditions. The upper bound for the reaching time can be computed as $t_r<\frac{1}{3}\ln{\frac{4}{0.1-0.05}}$, which yields $t_r<1.5$ seconds. Finally, in accordance with the chosen reaching law, to achieve QSMC, i.e., $|\sigma(t)|<\varepsilon$ for all $t\ge1.5$, the input is applied as given in \eqref{input}. 
The simulation results are presented in figures \ref{figop}-\ref{figsv}. In Fig. \ref{figop}, it can be observed that the output accurately tracks the desired output for all sets of initial conditions, demonstrating satisfactory tracking error performance as depicted in Fig. \ref{figerr}. Additionally, Fig. \ref{figip} illustrates that the input remains bounded. Moreover, Fig. \ref{figsv} demonstrates that the sliding variable reaches the QSM within the calculated time of $t_r<1.5$ seconds for all sets of initial conditions. These simulation results are consistent with the Theorem \ref{thm1}. \par
\textit{Example 2:} Consider a fourth-order numerical system described as 
\begin{equation}\label{ex2}
\begin{split}
   \dot x_i&=x_{i+1}, ~i=1,2,3,\\
    \dot x_4&=x_3\sin{(2x_2)}+x_1\cos{x_4}+(2-\sin{x_4})u+d, \\
    y&=x_1,
\end{split}   
\end{equation}
\begin{figure}
    \centering
\includegraphics[width=9cm,height=3cm]{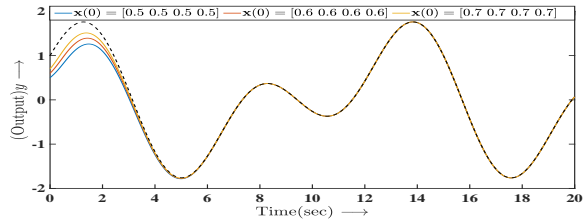}
    \caption{Tracking desired output (Example 2); Solid and dashed lines represent the actual output and desired trajectory, respectively.}
    \label{figop2}
\end{figure}
\begin{figure}
    \centering
\includegraphics[width=9cm,height=3cm]{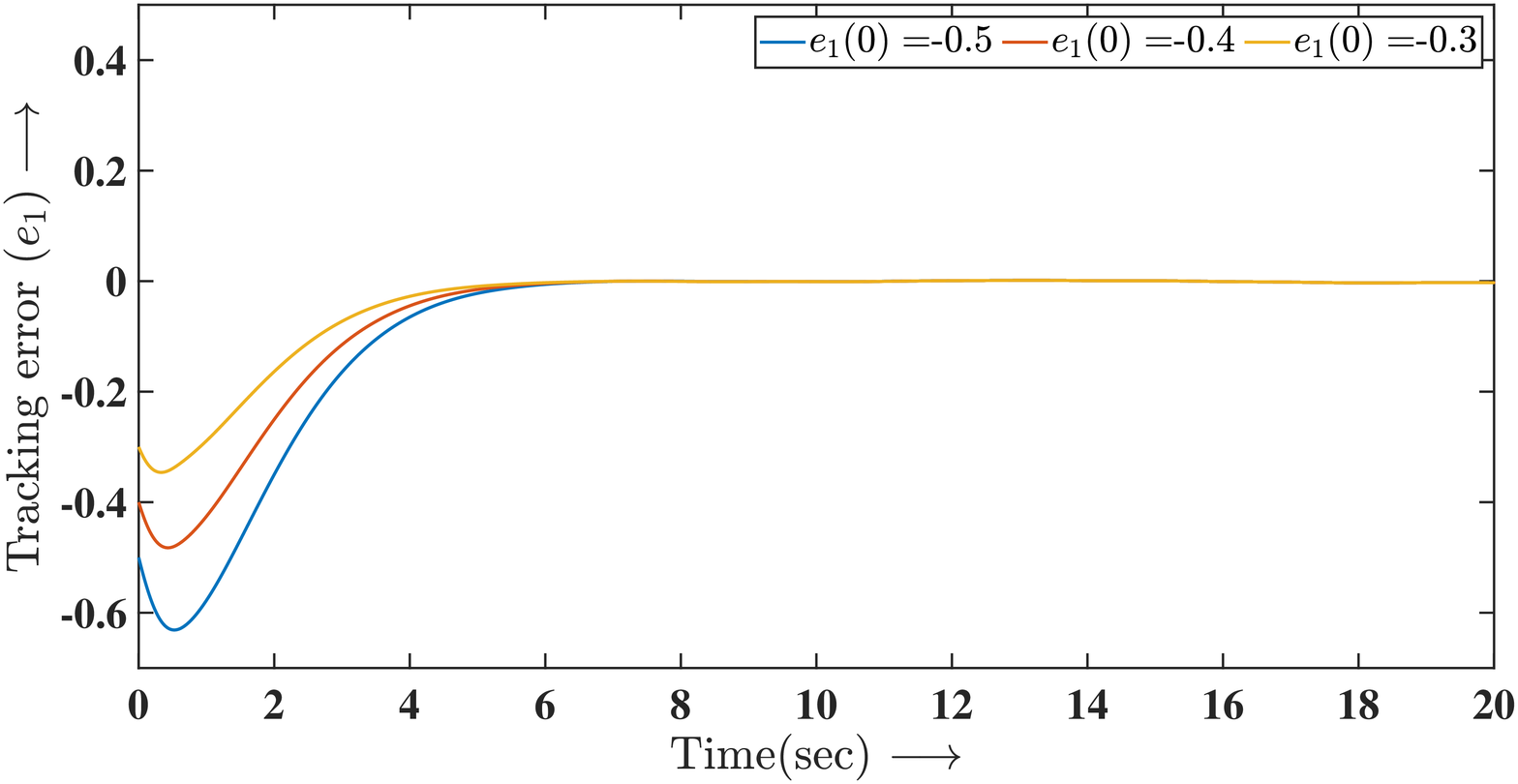}
    \caption{Tracking error performance (Example 2).}
    \label{figerr2}
\end{figure}
\begin{figure}
    \centering
\includegraphics[width=9cm,height=3cm]{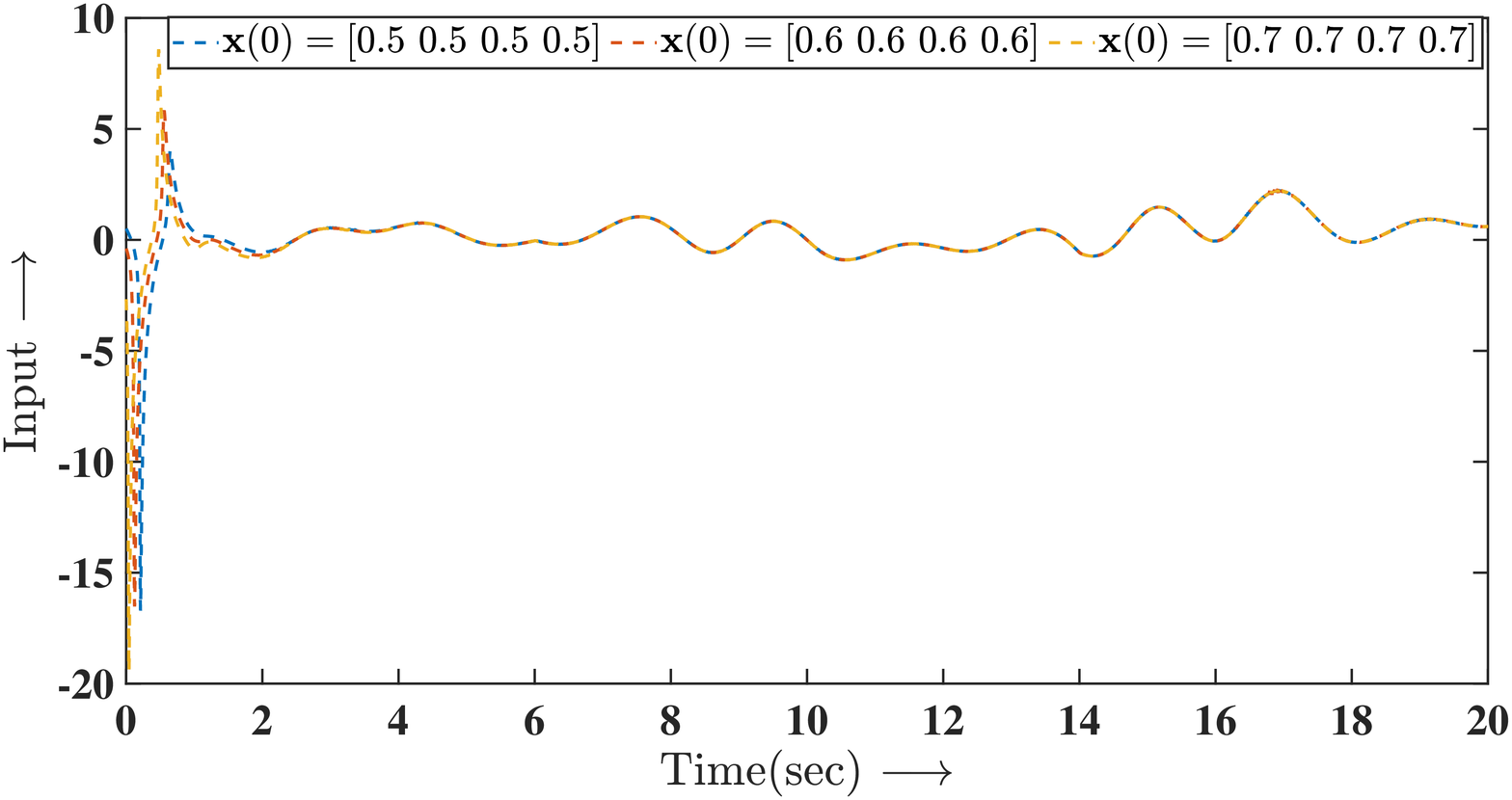}
    \caption{Input performance (Example 2).}
    \label{figip2}
\end{figure}
\begin{figure}[h]
    \centering
\includegraphics[width=9cm,height=3cm]{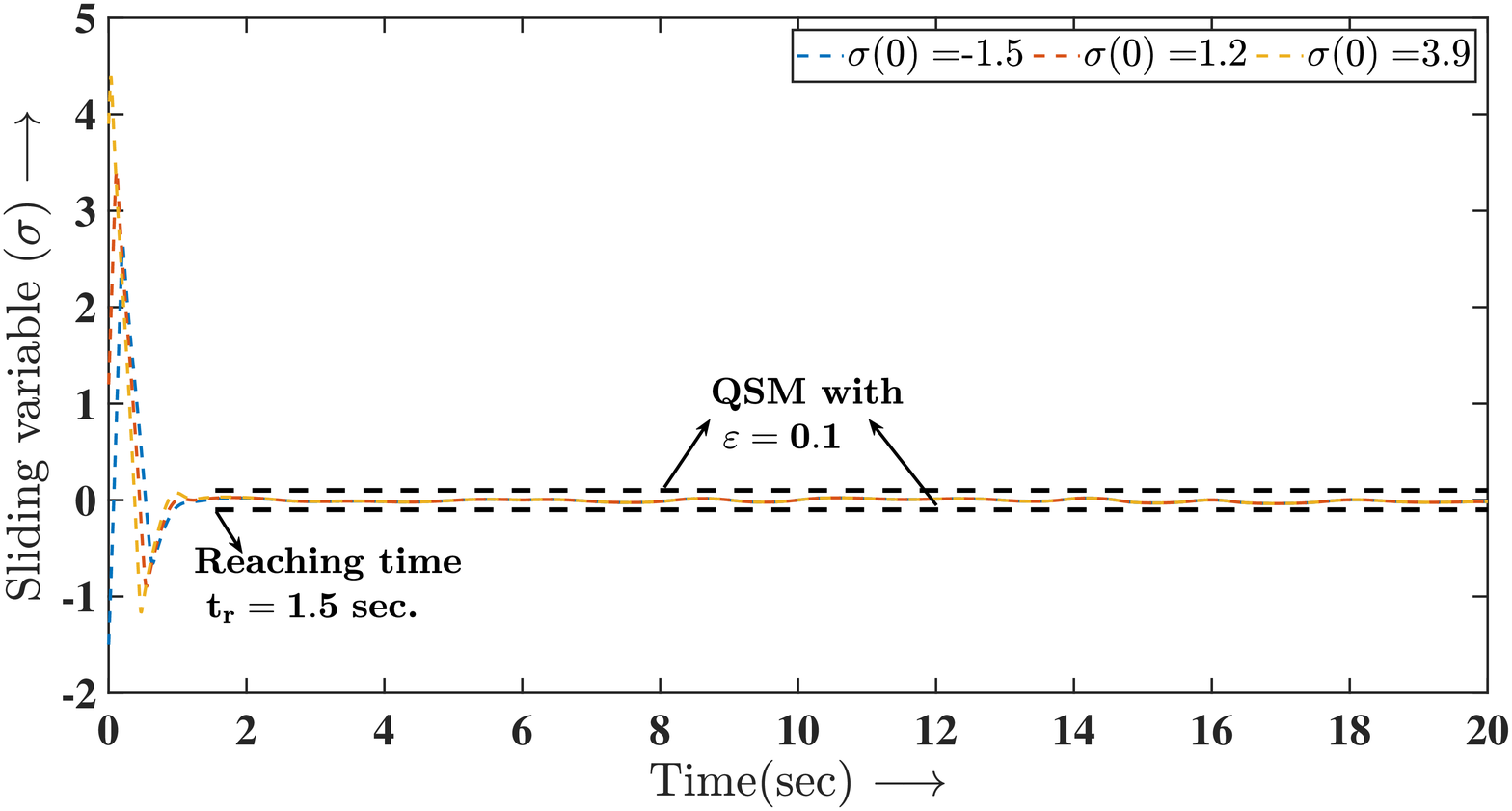}
    \caption{QSM reaching law performance (Example 2). }
    \label{figsv2}
\end{figure}
where $x_i, ~i=1,\ldots,4$ are states, $u$ and $y$ are  input and output of the system \eqref{ex2}, respectively. To verify the robustness of the proposed algorithm, the disturbance $d(t)$ is considered as $$d(t)=\begin{cases}
    0.5\sin{\frac{\pi}{2}t}, ~0\le t\le 6,\\
    \sin{\pi t}, ~6\le t\le 9,\\
    \cos{\pi t}-1,  ~t\ge 9.
\end{cases}$$ The desired output is chosen as $y_d=\sin{t}+\cos{0.5t}.$  For the simulation, three different initial conditions are chosen, $\textbf{x}(0)=[0.5~0.5~0.5~0.5], \textbf{x}(0)=[0.6~0.6~0.6~0.6]$ and $ \textbf{x}(0)=[0.7~0.7~0.7~0.7].$ The QSMC is designed using \eqref{input} to reach QSM of bandwidth, $\varepsilon=0.1$. The steps for the design are similar to the previous example, and it involves the following parameters: $c_1=6, c_2=12, c_3=8$ and $\rho(t)=5\exp(-3t)+0.05$ for the design of sliding surface and reaching law, respectively. The upper bound of reaching time for QSM of $\varepsilon=0.1$ bandwidth can be computed as $t_r<\frac{1}{3}\ln{\frac{5}{0.1-0.05}}$, which yields $t_r<1.53$. The simulation results are shown in Figs. \ref{figop2}-\ref{figsv2}. Fig. \ref{figsv2} shows that the sliding variable reaches the QSM in the estimated time for all sets of chosen state initial conditions. Fig. \ref{figop2}  shows the tracking of the desired output. Further, Fig. \ref{figerr2} and  Fig. \ref{figerr2} demonstrate satisfactory tracking performance and boundedness of input, respectively. Similar to the previous example, the simulation results are consistent with Theorem \ref{thm1}.

\section{Conclusion}\label{sec6}
The paper presents a control approach for achieving QSM and tracking control for continuous-time systems with uncertainties and disturbances. The proposed QSMC is chattering-free, does not need any knowledge of the system, and guarantees that the system trajectory reaches the QSM in a fixed time. The stability of the system is analyzed using mathematical proofs and simulation results. Further research can be conducted to apply the proposed approach to various classes of nonlinear systems and validate its performance in practical applications.
\bibliographystyle{ieeetr}
\bibliography{ref1.bib}
\end{document}